\newenvironment{proof}{\paragraph*{\mdseries \textit{Proof:}}}{\hfill$\square$}
\newcommand{\blue}{\color{black}}
\newcommand{\black}{\color{black}}
\definecolor{gray}{rgb}{0.5,0.5,0.5}
\newcommand{\X}{\mathcal X}
\newcommand{\Y}{\mathcal Y}
\newtheorem{problem}{Problem}
\newtheorem{lemma}{Lemma}
\newtheorem{remark}{Remark}
\newtheorem{proposition}{Proposition}
\providecommand{\customgenericname}{}
\newcommand{\newcustomtheorem}[2]{%
  \newenvironment{#1}[1]
  {%
   \renewcommand\customgenericname{#2}%
   \renewcommand\theinnercustomgeneric{##1}%
   \innercustomgeneric
  }
  {\endinnercustomgeneric}
}
\def\spacingset#1{\def\baselinestretch{#1}\small\normalsize}
\begin{document}

\begin{frontmatter}

\title{Monge-Kantorovich Optimal Transport\\
Through Constrictions and Flow-rate Constraints\thanksref{footnoteinfo}}

\thanks[footnoteinfo]{This paper was not presented at any IFAC 
meeting. Corresponding author Tryphon T.\ Georgiou. Tel. 949-824-9966.}

\author[1]{Anqi Dong}\ead{anqid2@uci.edu},    
\author[2]{Arthur Stephanovitch}\ead{arthur.stephanovitch@ens-paris-saclay.fr},  
\author[1]{Tryphon T.\ Georgiou}\ead{tryphon@uci.edu}

\address[1]{Department of Mechanical and Aerospace Engineering, University of California, Irvine, CA 92697, USA}      
\address[2]{\'Ecole normale sup\'erieure Paris-Saclay, France}

\begin{keyword}                                
Optimal mass transport, Flow-rate constraints, Optimal Control, Transportation control, Flow control.
\end{keyword}                            
% Five to ten keywords, chosen from the IFAC keyword list or with the help of the Automatica keyword wizard                                         
\begin{abstract}
We consider the problem to transport resources/mass while abiding by constraints on the flow through constrictions along their path between specified terminal distributions. Constrictions, conceptualized as toll stations at specified points, limit the flow-rate across.
We quantify flow-rate constraints via a bound on a sought probability density of the times that mass-elements cross toll stations and cast the transportation scheduling in a Kantorovich-type of formalism. Recent work by our team focused on the existence of Monge maps for similarly constrained transport minimizing average kinetic energy. The present formulation in this paper, besides being substantially more general, is cast as a (generalized) multi-marginal transport problem -- a problem of considerable interest in modern-day machine learning literature and motivated extensive computational analyses.
An enabling feature of our formalism is the representation of an average quadratic cost on the speed of transport as a convex constraint that involves crossing times.
\end{abstract}

\end{frontmatter}

\spacingset{.96}
\section{Introduction}\label{sec:introduction}
Recent years have witnessed the rapid development of the Monge-Kantorovich Optimal Mass Transport (OMT) theory, driven by a plethora of applications from computer vision to geophysics and from thermodynamics to economics and many more.
The basic problem that dates back to the $18^{\rm th}$ century is that of optimizing with respect to suitable transportation costs the schedule for transporting mass to meet supply and demand. This problem has been of such importance that brought the Nobel prize to Leonid Kantorovich who in 1942 devised duality theory and linear programming for its solution. Starting at the waning years of the $20^{\rm th}$ century, contributions by Brenier, McCann, Otto, Gangbo, Evans, Villani, and many others, sparked a new phase of rapid development with significant inroads of OMT as an enabling tool in mathematics and physics \cite{villani2021topics}.

Imposition of physical constraints along the transport has been studied by many authors in various contexts. In particular, moment-type constraints were considered 
in \cite[Section 4.6.3]{rachev1998mass}, \cite{ekren2018constrained}, a path-dependent cost has been studied in
\cite{carlier2008optimal}, \cite[Section 4]{santambrogio2015optimal} as a way to deal with congestion, while bounds on probability mass-densities have considered
in \cite{korman2013insights,rachev2006mass,gladbach2022limits} for similar purposes.
In the present work we revisit a formulation in our earlier work \cite{stephanovitch2022optimal}
that aims to limit flow-rate. Specifically, we seek to transport mass through constriction points/tolls abiding by flow-rate bounds. This earlier work proved the existence and uniqueness of flux-limited Monge maps (scheduling maps) that effect transport while minimizing the $\mathcal W_2$ Wasserstein length of trajectories. In the present work, we take an alternative view by casting the problem in a Kantorovich-type in the form of multi-marginal optimization. We note that, while multi-marginal problems have been studied for a while \cite{pass2015multi,kim2014general}, issues related to computation for large size problems continue to be of great interest to the present day -- we refer to \cite{haasler2021multimarginal,haasler2021scalable,haasler2021multi} for structured multi-marginal problems and to \cite{peyre2019computational} for the computational aspects of OMT in general.

The key idea in the present work for dealing with flow-rate constraints is to seek a distribution for the times of crossing the toll-stations, suitably bounded to ensure meeting the constraints. Our choice for a cost to be minimized is the mean-square of the velocity. This choice effectively orders the flow, in that particles do not overtake each other, and can be expressed as a convex functional of the optimization variables.
Moreover, this cost can be conveniently decomposed into a sum of costs, each engaging only space and timing variables that pertain to segments of the transport path. We ensure flux constraints by imposing bounds on the variables representing the time of toll-crossing. Under a fairly general setting our formulation amounts to a multi-marginal OMT problem with the time-crossing marginals as unknown and constrained parameters of the problem.

Below, in Section \ref{sec:OMTthroughtoll} we discuss the formulation of Monge-Kantorovich transport through tolls with flow-rate constraint.
In Section \ref{sec:one-dimensionalOMT} we specialize to measures with support on $\mathbb R$ and we highlight the nature of solutions with representative examples. In Section \ref{sec:Extension}, we discuss two generalizations of the basic problem, the first concerns partial transport through tolls where mass needs only to clear certain tolls on the way to the respective destination, while the second brings in a new dimension to the transport problem by considering arrival and departure times in the formulation.

\section{Monge-Kantorovich transport through tolls}\label{sec:OMTthroughtoll}

We consider the classical optimal mass transport problem with a quadratic cost, albeit with a flow-rate constraint on the flow across tolls at specified locations. 
Specifically, we consider distributions $\mu,\nu$, that are assumed throughout as being probability measures (i.e., with mass normalized to $1$), and seek a transportation plan from one to the other. The ``starting'' distribution $\mu$ represents the mass of a single commodity that needs to be transported accordingly and matches the demand that is specified by $\nu$. Along the way, the mass has to clear tolls abiding by corresponding constraints on throughput.

Let us first review the classical Monge-Kantorovich optimal mass transport with a quadratic cost (on $\mathbb R^n$). The Monge formulation of transport seeks a transportation map $T\;:\;x\mapsto y=T(x)$ so as to minimize {\blue the transportation cost functional}
$$
J(T)=\int_{\mathbb R^n} \|T(x)-x\|^2 \mu(dx),
$$
{\blue over the choice of $T$,} subject to {\blue the transportation map $T$ pushing the starting to the target distribution, i.e., $\mu$ to $\nu$. We denote this by writing} $T_\sharp \mu = \nu$. {\blue As it is standard,} the notation $\sharp$ denotes the ``push forward,'' meaning that for any Borel set $S\subset \mathbb R^n$, $\mu(T^{-1}(S))=\nu(S)$. 

{\blue In general,} such a map $T$ may not always exist. The Kantorovich relaxation seeks instead a measure $\pi$ on the product space $\mathbb R^n\times \mathbb R^n$ so that the marginals on the two components coincide with $\mu,\nu$, while $\pi$ minimizes the functional
$$
J(\pi)=\int_{\mathbb R^n\times \mathbb R^n} \|y-x\|^2 \pi(dx,dy).
$$
{\blue The measure $\pi$ is referred to as coupling of the two marginal distributions}.

Assuming that $\mu$ is absolutely continuous with density $\rho_0$ then the optimal transport map $T$ {\blue always} exists and is unique \cite[Theorem 2.12]{villani2021topics}, and the optimal cost is the minimal of
\begin{equation}\label{eq:Jv} 
J(\partial_t X) := \int_0^1 \int_{\mathbb R^n} \big(\partial_t X_t(x)\big)^2 \rho_0(x)dxdt,
\end{equation}
for $X:[0,1]\times \mathbb R^n \rightarrow \mathbb R^n$ such that\footnote{Following a common abuse of notation, for simplicity, we write interchangeably the density $\rho_0$ and measure $\mu$, and allow the notation to be understood from the context.}
$X_{0\sharp}\rho_0 = \rho_0$ and  $X_{1\sharp} \rho_0 = \nu$.
{\blue That is, the minimization is over velocity fields $\partial_t X$, for flows $X$ that corresponds the starting and ending marginal densities.}
{\blue It turns out that} the optimal flow is effected by 
\begin{align*}
X^{\rm opt}_t(x)= x + t\big(T(x)-x\big),
\end{align*}
for $T$ the optimal transport map. The minimal value of these functionals is designated as the Wasserstein distance $\mathcal{W}_2^2(\mu,\nu)$ \cite{villani2009optimal}. 

One observes that, in \eqref{eq:Jv}, what is actually being minimized is the average kinetic energy (modulo a factor of $2$), as the transport takes place over the interval $[0,1]$.
In general, having $d\mu=\rho_0dx$ and carrying out the transport over the interval $[0,t_f]$,
\begin{align*} 
\mathcal{W}_2^2(\mu,\nu) & {\blue = J(\partial_t X_t^{\rm opt})}\\
&= \int_0^{t_f} t_f\int_{\mathbb R^n} {\blue \big(\partial_t X^{\rm opt}_t(x)\big)^2} \rho_0(x)dxdt\\
& = \int_0^{t_f} \int_{\mathbb R^n} \frac{\|T(x)-x\|^2}{t_f}\rho_0(x)dxdt
\end{align*}   
since ${\blue \partial_t X^{\rm opt}_t(x)= \big(T(x)-x\big)/t_f}$, recovering $J(T)$ for the optimal transport map. {\blue Most importantly, {\em the average kinetic energy can also be written}}, for the relaxed Kantorovich formulation,
as
\[
\mathcal{W}_2^2(\mu,\nu) =\int_0^{t_f} \int_{\mathbb R^n\times \mathbb R^n} \frac{\|y-x\|^2}{t_f}\pi(dx,dy) dt.
\]
The derivation is immediate since $\int_0^{t_f}dt=t_f$. The expression helps highlight the average kinetic energy as the minimal of the convex cost $\|y-x\|^2/t_f$, when transporting mass from $x$ to $y$ over a time interval of duration $t_f$.

We are now in a position to {\em \blue formulate the analogue of the Monge-Kantorovich problem for the case where the mass needs to clear tolls}, initially a single toll, at specified locations and with a bound on the flow-rate. 

We first assume that a single toll is located at $\xi$ and that mass flowing through cannot exceed a given flow-rate $r$. {\blue Thus, the flow-rate through the toll must satisfy
\begin{align*}
\sigma(t)\leq r. 
\end{align*}
The flow-rate represents the mass density over time as the mass is transported through the toll, i.e., $\sigma(t)dt$ represents mass that clears the toll in the interval $[t,t+dt]$.

The insight that allows us to formulate optimal transport through toll(s) as a multi-marginal optimal transport problem is to view the {\em as-yet-undermined} mass density $\sigma(t)$ as a time marginal distribution, together with the specified spatial initial and final marginals $\mu(dx)$ and $\nu(dy)$.
Thereby, the Kantorovich formulation of the problem seeks a coupling $\pi(dx,dy,dt)$ between the two given marginals $\mu(dx)$, $\nu(dy)$, and the sought marginal $\sigma(t)dt$, that specifies the portion of mass at $[x,x+dx]$, heading towards $[y,y+dy]$, is transported through the toll in the interval $[t,t+dt]$.

In light of the coupling between a starting point $x$ at time $\tau=0$, location of the toll $\xi$ that mass is to be transported at time $\tau=t$, and terminal destination $y$ to arrive at $\tau=t_f$, the average kinetic energy\footnote{The average kinetic energy is often referred to as {\em action integral} in physics.} over the segment $\tau\in[0,t]$ is minimized when particles travel at constant velocity $(\xi-x)/t$, and thus, equal to
\[
\int_0^t \|(\xi-x)/t\|^2 d\tau = \|\xi-x\|^2/t.
\]
Likewise, the average kinetic energy over the remaining interval $[t,t_f]$ is $\|y-\xi\|^2/(t_f-t)$.
}
Thus, we arrive at the formulation of our first problem.

\begin{problem}\label{prob:problem1}
Given probability measures $\mu,\nu$ and $r>0$, determine a probability measure $\pi(dx,dy,dt)$ on $\mathbb R^n\times \mathbb R^n\times [0,t_f]$ that minimizes
\begin{align}
    \iiint_{x,y,t} \left( 
    \frac{\|\xi-x\|^2}{t}+\frac{\|y-\xi\|^2}{t_f-t}
    \right)  \pi(dx,dy,dt),
\label{eq:costProblem1}
 \end{align}  
subject to the marginals
$\iint_{y,t}\pi(dx,dy,dt) =\mu(dx)$ and
$\iint_{x,t}\pi(dx,dy,dt) =\nu(dy)$, and the flow-rate constraint $\iint_{x,y}\pi(dx,dy,dt) \leq r dt$. 
\end{problem}

Throughout, notation as in $\iint_{x,y}\pi(dx,dy,dt)$, indicates integration over the space of the variables that are subscribed to the integral.

{\blue Note that} the above formulation allows mass that is initially concentrated  $x$ and is heading towards the same terminal destination $y$, to split and transport over different paths clearing the toll at different times $t$, as prescribed by the coupling, in order to abide by the imposed bound on flow-rate.
{\blue This will necessarily be the case when $\mu$ assigns finite mass at a point (contains Dirac deltas or singular part, in higher dimensions).}

A somewhat more ambitious scheduling may require optimizing the average kinetic energy, from a starting measure $\mu$ to the terminal $\nu$, clearing two or multiple tolls in succession.
For the case of two tolls, define  $\pi(dx,dy,dt_1,dt_2)$ as the coupling of mass at $x$ heading to $y$ that clears the tolls at times $t_1,t_2$, respectively, where $0\leq t_1\leq t_2\leq t_f$. The coupling, as before, is a probability measure that specifies the respective amount of mass that is transported as prescribed.

The problem with two or multiple tolls is analogous to Problem \ref{prob:problem1}. For instance, in the case of two tolls, the cost  $c^{\xi_{1},\xi_{2}}$ to be minimized over the choice of admissible couplings is
\begin{align*}
    \iiiint_{x,y,t_1,t_2} \left( 
    \frac{\|\xi_1-x\|^2}{t_1}+\frac{\|\xi_2-\xi_1\|^2}{t_2-t_1}+\frac{\|y-\xi_2\|^2}{t_f-t_2}
    \right)  \pi.
 \end{align*}
Admissibility of $\pi$ amounts to consistency with the problem data, i.e., it amounts to satisfying the usual marginal constraints on $x$ and $y$, as well as the flow-rate constraints
$\iiint_{x,y,t_2}\pi(dx,dy,dt_1,dt_2) \leq r_1 dt_1$ and
$\iiint_{x,y,t_1}\pi(dx,dy,dt_1,dt_2) \leq r_2 dt_2$.

We remark that flow-rate constraints can be time-varying without any significant overhead in the difficulty of the problem. Specifically, in the condition $\iint_{x,y}\pi(dx,dy,dt) \leq r dt$ the flow-rate bound can be specified by a time-dependent density $r(t)$ that regulates permissible {\em throughput} at different times. Evidently, the bound could also be a measure, but this is deemed of minimal practical relevance and not followed here.

Our first technical result establishes in a straightforward manner existence of solutions.
\begin{proposition}[Existence]\label{prop:prop1} Provided $r t_f>1$,
Problem \ref{prob:problem1} admits a (minimizing) solution $\pi$.
\end{proposition}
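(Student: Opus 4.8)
The plan is to establish existence via the direct method in the calculus of variations: show the feasible set of couplings is nonempty, argue it is compact in the weak-* topology, and that the cost functional is lower semicontinuous, so that a minimizing sequence has a convergent subsequence whose limit is feasible and attains the infimum.

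First I would verify \emph{nonemptiness} of the feasible set, which is exactly where the hypothesis $r t_f > 1$ enters. The natural candidate is to take $\pi = (\mu \otimes \nu) \otimes \sigma$, i.e., to decouple the spatial transport (using the independent coupling $\mu \otimes \nu$, or indeed any admissible coupling of $\mu$ and $\nu$) from the crossing-time distribution, and to choose $\sigma$ to be the uniform density $\sigma(t) = 1/t_f$ on $[0, t_f]$. This $\sigma$ is a probability density and satisfies the flow-rate bound $\sigma(t) = 1/t_f \le r$ precisely because $r t_f > 1$ (the non-strict case $r t_f = 1$ forces $\sigma \equiv 1/t_f$ as the only option, but strictness gives room and in any event nonemptiness still holds). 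One must also check this candidate has \emph{finite cost}: the only danger is the singularity of $\|\xi - x\|^2/t$ as $t \to 0$ and of $\|y-\xi\|^2/(t_f - t)$ as $t \to t_f$. With the uniform $\sigma$, the $t$-integral of $1/t$ diverges, so the plain uniform choice is not quite enough; instead I would take $\sigma$ supported on a compact subinterval $[\epsilon, t_f - \epsilon]$ with density $1/(t_f - 2\epsilon)$, choosing $\epsilon$ small enough that $1/(t_f - 2\epsilon) \le r$, which is possible exactly because $r t_f > 1$. Then $\int (\|\xi-x\|^2/t) \, d\pi \le \epsilon^{-1} \int \|\xi - x\|^2 \mu(dx) < \infty$ since $\mu, \nu$ have finite second moments (implicit in the $\mathcal{W}_2$ framework), and similarly for the other term, so the candidate has finite cost and the infimum is finite.

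Next, \emph{compactness}: the admissible set consists of probability measures $\pi$ on $\mathbb{R}^n \times \mathbb{R}^n \times [0, t_f]$ whose first two marginals are the fixed measures $\mu$ and $\nu$ and whose third marginal is dominated by $r\,dt$ on a compact interval. The marginal constraints make the family tight (tightness of $\mu$ and $\nu$ in the first two coordinates, and compactness of $[0,t_f]$ in the third), so by Prokhorov's theorem any minimizing sequence $\pi_k$ has a weak-* convergent subsequence $\pi_k \rightharpoonup \pi^\star$. The marginal constraints and the flow-rate constraint $\pi(\cdot \times \cdot \times dt) \le r\,dt$ are all closed under weak-* convergence (the first by continuity of pushforward under weak convergence, the second because it is equivalent to $\iiint f(t)\,\pi \le r \int f(t)\,dt$ for all nonnegative continuous $f$, a condition stable under weak limits), so $\pi^\star$ is feasible.

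Finally, \emph{lower semicontinuity}: the integrand $c(x,y,t) = \|\xi - x\|^2/t + \|y - \xi\|^2/(t_f - t)$ is nonnegative and lower semicontinuous on $\mathbb{R}^n \times \mathbb{R}^n \times (0, t_f)$ (taking the value $+\infty$ at $t = 0$ and $t = t_f$, consistently with the convention that makes it lower semicontinuous on the closed interval). By the standard portmanteau/Fatou argument for weak convergence — approximating $c$ from below by an increasing sequence of bounded continuous functions — the functional $\pi \mapsto \iiint c\, d\pi$ is weakly lower semicontinuous, hence $\iiint c\, d\pi^\star \le \liminf_k \iiint c\, d\pi_k = \inf$, so $\pi^\star$ is a minimizer. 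The main obstacle, and the only place requiring genuine care, is the interplay between the singular cost and the time marginal: one must ensure the minimizing sequence does not escape to configurations where mass concentrates crossing-times near $0$ or $t_f$ in a way that makes the limit lose finiteness — but this is controlled precisely because the flow-rate bound $\sigma \le r$ on the (bounded) interval prevents such concentration, so the lower semicontinuity argument goes through cleanly.
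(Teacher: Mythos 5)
Your proof is correct and follows the same direct-method skeleton as the paper: exhibit a feasible plan, get compactness of the admissible set from tightness on $\mathbb R^n\times\mathbb R^n\times[0,t_f]$, and conclude via lower semicontinuity of the cost. Two details are worth noting. For nonemptiness the paper glues couplings of $(\mu,u)$ and $(u,\nu)$ against the uniform time marginal $u=(1/t_f)dt$, whereas you simply take a product plan $\mu\otimes\nu\otimes\sigma$; both work, and gluing is not actually needed for your candidate. More substantively, you correctly observe that the uniform $\sigma$ on all of $[0,t_f]$ gives \emph{infinite} cost because $\int_0^{t_f} t^{-1}\,dt$ diverges, and you repair this by supporting $\sigma$ uniformly on $[\epsilon,t_f-\epsilon]$ with $1/(t_f-2\epsilon)\le r$, which is possible precisely because $rt_f>1$. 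This establishes that the infimum is finite (given finite second moments of $\mu,\nu$), a point the paper's proof leaves unaddressed; without it, existence of a minimizer is technically true but potentially vacuous. Your explicit check that the marginal and flow-rate constraints are closed under weak convergence also fills in a step the paper compresses into the single word ``compact.''
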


\begin{proof}
{\blue Denoting with $\sigma(t)$ the mass density that crosses the toll at time $t$, as before, the transport of the total mass through the toll over the interval $[0,t_f]$ subject to $\sigma(t)\leq r$, requires that
\begin{align*}
    1=\int_{0}^{t_f} \sigma(t) dt \leq \int_{0}^{t_f} r dt=rt_f,
\end{align*}
Thus, $rt_f\geq 1$ is a necessary condition\footnote{\blue The case $rt_f=1$ is only feasible in the non-generic situation where $\max {\rm Support}(\mu)=0=\min {\rm Support}(\nu)$. In general, when e.g.,  $\max {\rm Supp}(\mu)<0$, the ``rightmost'' mass on the support of $\mu$ must be transported with infinite velocity 
so as to allow $\sigma(t)=1/t_f$ over $[0,t_f]$ for the total mass to have enough time to be transported through. Such limiting cases are non-physical, leading to diverging transportation cost, and thereby excluded.}.}
The space of admissible measures $\pi$ in Problem  \ref{prob:problem1}, i.e.,
\begin{align*}
\Pi= \bigg\{&\pi \in \mathcal{P}\big(\mathbb{R}^n,\mathbb{R}^n,[0,t_f]\big) \mid \iint_{y,t}\pi(dx,dy,dt) =\mu(dx),\; \\
& \hspace*{-20pt}\iint_{x,t}\pi(dx,dy,dt) =\nu(dy),
\iint_{x,y}\pi(dx,dy,dt) \leq  r dt \bigg\}
\end{align*}
is non-empty. {\blue It is also compact for the weak topology as it is tight and closed for the narrow convergence. Indeed, the set of coupling measures between a finite number of probability measures is tight \cite[Theorem 1.4]{santambrogio2015optimal}, and each of  the three constraints in $\Pi$ is closed for the narrow convergence. Then, as the cost function (integrand in \eqref{eq:costProblem1}) is lower semi-continuous, we have the existence of a minimizer. To see that $\Pi$ is non-empty, we postulate a uniform distribution on the crossing times, $u=(1/t_f)dt$, and couplings $\tilde\pi_{xt},\tilde\pi_{ty}$ that are consistent with the marginals $(\mu,u)$ and $(u,\nu)$, respectively. Then, the existence of an element $\tilde \pi\in\Pi$ with marginals
\begin{align*}
\int_{y} \tilde \pi(dx,dy,dt) = \tilde\pi_{xt}, \ \mbox{and} \ \int_{x} \tilde \pi(dx,dy,dt)  = \tilde\pi_{yt},
\end{align*}
is} guaranteed by the gluing lemma \cite[page 11]{villani2009optimal}.
\end{proof}

Note that the convexity of the cost is not used in the proposition. We next consider whether the optimal solution is unique. We first discuss the special case where $n=1$. Moreover, for this case where locations on the underlying space can be ordered (e.g., from left to right), we assume that the toll sits between the two distributions $\mu,\nu$, specifically that the support of $\mu$ is to the left of $\xi$ and that the support of $\nu$ is to the right. Without loss of generality, we let $\xi=0$ and we thus consider the following problem.

\begin{customthm}{$\mathbf 1^\prime$}[Simplification]\label{prob:problem1prime}
We consider probability measures $\mu,\nu$ on $\mathbb R$, with support on $[-M,0)$ and $(0,M]$ for sufficiently large $M$, respectively. Let $r,t_f>0$ such that $rt_f>1$.
Determine a probability measure $\pi(dx,dy,dt)$ as the minimizer of
\begin{align*}
    \iiint_{x,y,t} \left( 
    \frac{x^2}{t}+\frac{y^2}{t_f-t}
    \right)  \pi(dx,dy,dt)
 \end{align*}  
 subject to the flow-rate constraint $\iint_{x,y}\pi(dx,dy,dt) \leq r dt$, and the marginals
 $\iint_{y,t}\pi(dx,dy,dt) =\mu(dx)$ and
 $\iint_{x,t}\pi(dx,dy,dt) =\nu(dy)$.
\end{customthm}

We begin with a technical lemma that establishes a correspondence between the time and location of mass as this is transported past the toll. A schematic that exemplifies the statement of the lemma below is shown in Fig.~\ref{fig:diracillustration}.

\begin{lemma}[Monotonicity]\label{lemma:keylemma}
Let $c_{xt}:=\frac{x^2}{t}$, with $(x,t) \in[-M,0) \times (0,t_f]$, and $\mu,\sigma$ measures with support on $[-M,0)$ and $(0,t_f]$, respectively, with $\sigma(t)dt$ absolutely continuous with respect to the Lebesgue measure. The minimizer of the Kantorovich problem
\begin{align*}
    \min_{\pi}\iint c_{xt}\pi,
\end{align*}
where $\pi$ represents a coupling of the two marginals $\mu(dx),\sigma(t)dt$,
is unique with support on the graph of a non-increasing function $T^\X(t)$.
\end{lemma}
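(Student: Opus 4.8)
The plan is to exploit the strict Spence--Mirrlees (twist) structure of the cost $c_{xt}=x^2/t$ on the region $\{x<0,\ t>0\}$, which is exactly where $\mu$ and $\sigma$ live. First I would dispose of existence: couplings of $\mu$ and $\sigma$ form a nonempty, weakly compact set (all are supported in the compact box $[-M,0]\times[0,t_f]$), and $(x,t)\mapsto x^2/t$ is nonnegative and lower semicontinuous there, so the direct method produces a minimizer $\pi$; I would then assume, as is the case whenever the lemma is invoked, that the optimal value is finite (otherwise every coupling is trivially ``optimal'' and the uniqueness claim is vacuous). The remaining work is to show that this $\pi$ is an anti-monotone Monge coupling and that it is the unique optimizer.

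For the structure of $\pi$ I only need the elementary pairwise optimality property: by the standard swapping argument -- localize near two support points and exchange a sliver of mass, which would strictly lower the finite cost if the inequality failed -- every optimizer satisfies, for all $(x_1,t_1),(x_2,t_2)\in\supp\pi$, the bound $c_{x_1t_1}+c_{x_2t_2}\le c_{x_1t_2}+c_{x_2t_1}$; see also \cite{villani2009optimal} for the general $c$-cyclical monotonicity principle. This rearranges to
$$
\big(x_1^2-x_2^2\big)\Big(\tfrac{1}{t_1}-\tfrac{1}{t_2}\Big)\le 0 ,
$$
and since $x_1,x_2<0$ this says precisely that $t_1<t_2$ forces $x_1\ge x_2$: the support of $\pi$ is a non-increasing relation in the variable $t$. (Equivalently, $\partial^2_{xt}c_{xt}=-2x/t^2>0$, so $c$ is supermodular and its optimal plan is anti-comonotone.)

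Next I would upgrade this relation to the graph of a non-increasing function. Writing $F(t)=\{x:(x,t)\in\supp\pi\}$ for the fiber over $t$ and $T^-(t)=\inf F(t)$, $T^+(t)=\sup F(t)$, the monotonicity just obtained gives $T^+(t')\le T^-(t)$ whenever $t<t'$, so $T^-$ and $T^+$ are non-increasing and the set $\{t:T^-(t)<T^+(t)\}$ indexes a family of nondegenerate subintervals of $[-M,0]$ with pairwise disjoint interiors; hence it is countable, and therefore $\sigma$-null because $\sigma(t)dt$ is absolutely continuous. Likewise $\{t:F(t)=\emptyset\}$ is $\sigma$-null since its preimage in $\mathbb R^2$ is disjoint from $\supp\pi$. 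Consequently $\pi$ is concentrated on the graph of the non-increasing function $T^x:=T^-=T^+$, defined for $\sigma$-a.e.\ $t$ and, if one wishes, extended to a non-increasing function on $(0,t_f]$. Uniqueness then follows from convexity: if $\pi_1,\pi_2$ are both optimal, $\bar\pi:=\tfrac12(\pi_1+\pi_2)$ is a feasible coupling with the same (minimal, linear) cost, hence optimal, hence carried by the graph of some non-increasing $T^x$; since $\pi_1,\pi_2\ll\bar\pi$, each is carried by that same graph; and a coupling with $t$-marginal $\sigma$ carried by the graph of a Borel function of $t$ can only be $\int\delta_{T^x(t)}\otimes\delta_t\,\sigma(dt)$, so $\pi_1=\pi_2$.

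The step I expect to be the real obstacle is passing from the $c$-monotone support to a single-valued non-increasing map: the pairwise inequality only rules out ``ascending'' pairs and still permits vertical segments in $\supp\pi$, and the entire role of the absolute-continuity hypothesis on $\sigma$ is to render those segments $\sigma$-negligible. Pinning down the ``at most countably many vertical fibers'' claim and checking that the exceptional $t$-set carries no $\sigma$-mass is the one place demanding genuine care; the monotonicity computation and the convexity argument for uniqueness are otherwise routine.
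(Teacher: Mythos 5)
Your argument is correct, and its engine is the same one the paper uses: the strict cross-difference inequality $(x_1^2-x_2^2)(1/t_1-1/t_2)$ having a definite sign on $x<0$, i.e.\ strict quasi-monotonicity (supermodularity) of $c_{xt}=x^2/t$. Where you diverge is in what you do with it. The paper, after exhibiting this inequality, simply invokes the one-dimensional theory (\cite[Theorem 2.18, Remark 2.19]{villani2021topics}, in the spirit of \cite{cambanis1976inequalities}) to conclude that the optimizer exists, is unique, and is the monotone rearrangement, which moreover gives the explicit quantile characterization $\int_0^{T^x(t)}\mu(dx)=\int_0^t\sigma(s)ds$ and makes transparent why $T^x$ is constant across atoms of $\mu$. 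You instead rebuild the conclusion from scratch: existence by the direct method, $c$-cyclical monotonicity of the support of any optimizer, the countability argument showing that ``vertical'' fibers (nondegenerate intervals with disjoint interiors) form a $\sigma$-null set because $\sigma(t)dt$ is absolutely continuous, and uniqueness by averaging two optimizers and noting that the average must again be carried by a single non-increasing graph -- an argument that, incidentally, parallels the one the paper itself deploys later in the proof of Proposition \ref{prop:prop2}. Your route is longer but self-contained, and it isolates precisely where the absolute continuity of $\sigma$ enters (killing vertical segments), which the citation-based proof leaves implicit; the paper's route is shorter and delivers the explicit rearrangement formula for $T^x$ for free. The only points to tidy are technical: near $t=0$ the cost is unbounded, so either note (as the lemma's support hypothesis allows) that $\supp\sigma$ is compactly contained in $(0,t_f]$, making $c$ continuous and bounded on the relevant set so that $c$-cyclical monotonicity of the support applies in its classical form, or work with ``concentrated on a $c$-monotone set'' rather than with $\supp\pi$; neither affects your conclusion.
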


\begin{proof}
We first observe that for any two pairs $x,x'\in [-M,0)$ and $t,t'\in(0,t_f]$ for which {\blue $0>x>x'\geq-M$} and {\blue $t_f \geq t>t'>0$}, it holds that
\begin{align*}
\frac{x^2}{t}+\frac{(x')^2}{t'} > \frac{x^2}{t'}+\frac{(x')^2}{t}.    
\end{align*}
The ordering in this inequality characterizes $c_{xt}$ as being quasi-monotone, in the language of \cite{cambanis1976inequalities}, see also \cite[Section 3.1]{rachev1998mass}. 

It follows from \cite[Theorem 2.18, and Remark 2.19]{villani2021topics} that the optimal coupling $\pi$ exists and is given by the monotone rearrangement of $\mu,\sigma$, that is, for a suitable function $T^\X(t)$
\begin{align*}
{\blue \int_{T^\X(t)}^{0}\mu(dx) =  \int_0^t\sigma(s)ds.}   
\end{align*}
Since $T^\X(t)<0$, it is non-increasing (and is constant on time-intervals that correspond to possible Dirac components of $\mu$). This completes the proof.
\end{proof}

{\blue For similar reasons, the cost $c_{yt}$ (with  $t\in(0,t_f]$ and $y\in(0,M]$, and $M$ as in Problem \ref{prob:problem1prime}) is quasi-monotone. Hence, once again, for a suitable function $T^\Y(t)$,
\begin{align*} 
    \int_{T^\Y(t)}^{M}\!\!\nu(dy) =  \int_0^t\sigma(s)ds,
%\ \bigg(\int_{0}^{T^\Y(t)}\!\!\!\!\!\nu(dy) =  \int_{t}^{t_f}\!\sigma(s)ds\bigg)
\end{align*}
for $T^\Y(t)>0$, so that $T^\Y(t)$ is non-increasing.
In light of the monotonicity of $T^\X(t)$ and $T^\Y(t)$, we establish the following proposition. 
}

\begin{proposition}[Uniqueness]\label{prop:prop2} Under the assumptions of Problem \ref{prob:problem1prime} the minimizer is unique. Moreover, there are functions $T^\X(t),T^\Y(t)$ are monotonically non-increasing such that
\begin{align*}
\pi=(T^\X,T^\Y,{\rm Id})_{\sharp \sigma},    
\end{align*}
where ${\rm Id}(t)=t$ is the identity map and $\sigma(t)dt$ is an absolutely continuous measure on $[0,t_f]$ with $\sigma(t)\leq  r$.
\end{proposition}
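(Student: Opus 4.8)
The plan is to reduce the three–marginal problem to a scalar optimization over the crossing-time density $\sigma$, and then to obtain uniqueness by reparametrizing $\sigma$ through its quantile function, which turns the cost into a \emph{strictly} convex functional.

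\emph{Reduction to $\sigma$.} First I would note that the constraint $\iint_{x,y}\pi(dx,dy,dt)\le r\,dt$ forces the $t$-marginal $\sigma$ of any admissible $\pi$ to be absolutely continuous with density at most $r$. Fixing $\sigma$ and disintegrating $\pi$ along $t$, the splitting of the integrand as $x^2/t+y^2/(t_f-t)$ shows that the cost of $\pi$ depends only on its $(x,t)$-marginal (a coupling of $\mu,\sigma$) and its $(y,t)$-marginal (a coupling of $\sigma,\nu$), hence it is at least $F(\sigma):=\mathcal T_{c_{xt}}(\mu,\sigma)+\mathcal T_{c_{ty}}(\sigma,\nu)$, the sum of the two Kantorovich transport costs for $c_{xt}=x^2/t$ and $c_{ty}=y^2/(t_f-t)$. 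Conversely, Lemma~\ref{lemma:keylemma} — together with its mirror version, valid because $c_{ty}$ is quasi-monotone on $(0,M)\times[0,t_f)$ by the identical computation — yields unique optimal couplings supported on graphs of non-increasing maps $T^x_\sigma,T^y_\sigma$, and gluing these (deterministic) couplings along $t$ produces $\pi_\sigma:=(T^x_\sigma,T^y_\sigma,\mathrm{Id})_{\sharp}\sigma$, which attains $F(\sigma)$ and is the unique minimizer among couplings with $t$-marginal $\sigma$. Thus $\min_\pi J(\pi)=\min_\sigma F(\sigma)$, every optimal $\pi$ has the form $\pi_\sigma$ for some minimizer $\sigma$ of $F$, and the proposition reduces to the uniqueness of the minimizer of $F$ over the convex set of admissible densities $\sigma$.

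\emph{Passing to the quantile variable.} For admissible $\sigma$ with CDF $f_\sigma$, I would introduce the quantile function $g_\sigma(s):=\inf\{t:f_\sigma(t)\ge s\}$; it is non-decreasing with $g_\sigma(s)-g_\sigma(s')\ge (s-s')/r$ for $s>s'$, and $\sigma\mapsto g_\sigma$ is a bijection (up to a Lebesgue-null set of $s$) onto the convex set $\mathcal G$ of such maps $[0,1]\to[0,t_f]$. Writing $Q_\mu,Q_\nu$ for the monotone rearrangements implicit in Lemma~\ref{lemma:keylemma}, so that $T^x_\sigma=Q_\mu\circ f_\sigma$ and $T^y_\sigma=Q_\nu\circ f_\sigma$, the quantile change of variables $\int h\,d\sigma=\int_0^1 h(g_\sigma(s))\,ds$ gives
\[
F(\sigma)=\int_0^1\Big(\frac{Q_\mu(s)^2}{g_\sigma(s)}+\frac{Q_\nu(s)^2}{t_f-g_\sigma(s)}\Big)\,ds .
\]
Since $\mu$ is supported in $(-M,0)$ we have $Q_\mu(s)\neq 0$ for $s\in(0,1)$, so the integrand is a strictly convex function of the value $g_\sigma(s)\in(0,t_f)$ for a.e.\ $s$; moreover the infimum of $F$ is finite (any $\sigma$ supported on $[\delta,t_f-\delta]$ with density $\le r$ works, and such $\sigma$ exists since $rt_f>1$), so any minimizer has $g_\sigma\in(0,t_f)$ a.e.

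\emph{Uniqueness and conclusion.} If $\sigma_0\neq\sigma_1$ both minimized $F$, then $g_{\sigma_0}\neq g_{\sigma_1}$ on a set of positive measure and both take values in $(0,t_f)$ a.e., so $g:=\tfrac12(g_{\sigma_0}+g_{\sigma_1})\in\mathcal G$ corresponds to an admissible density $\bar\sigma$ and strict convexity of the integrand gives $F(\bar\sigma)<\tfrac12F(\sigma_0)+\tfrac12F(\sigma_1)=\min F$, a contradiction. Hence, with existence supplied by Proposition~\ref{prop:prop1} applied to Problem~\ref{prob:problem1prime}, there is a unique minimizer $\sigma^\ast$, the measure $\pi^\ast=(T^x_{\sigma^\ast},T^y_{\sigma^\ast},\mathrm{Id})_{\sharp}\sigma^\ast$ is the unique minimizer of the original problem, and $T^x:=T^x_{\sigma^\ast}$, $T^y:=T^y_{\sigma^\ast}$ are non-increasing by Lemma~\ref{lemma:keylemma}. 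I expect the main obstacle to be exactly this uniqueness step: $F$ is only convex, not strictly convex, as a function of $\sigma$ itself (each Kantorovich cost is a supremum of linear functionals of $\sigma$), and the essential point is that it becomes strictly convex once expressed through $g_\sigma$; the remaining care lies in the $\sigma\leftrightarrow g_\sigma$ correspondence and in the change of variables when $\sigma$ vanishes on an interval (so $g_\sigma$ has a jump), which affects only a null set of $s$.
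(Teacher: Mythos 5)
Your reduction step coincides with the paper's: the cost splits as $c_{xt}+c_{yt}$, so the $(x,t)$- and $(y,t)$-marginals of any optimal $\pi$ must themselves be optimal couplings of $(\mu,\sigma)$ and $(\sigma,\nu)$ (otherwise gluing better couplings along the common $t$-marginal would improve $\pi$), and Lemma~\ref{lemma:keylemma} makes these couplings deterministic with non-increasing maps, so $\pi=(T^x_\sigma,T^y_\sigma,\Id)_{\sharp}\sigma$ and everything hinges on the uniqueness of $\sigma$. Where you genuinely depart from the paper is in that last step. The paper averages two putative minimizers, $\tilde\pi=\tfrac12\pi+\tfrac12\hat\pi$, observes that $\tilde\pi$ is still optimal, and argues that its $(x,t)$-marginal could not be supported on the graph of a single function unless $T^x=\hat T^x$ (similarly for $T^y$), leaving the final passage to $\sigma=\hat\sigma$ rather terse. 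You instead pass to quantile coordinates: the flow-rate constraint becomes the convex condition $g(s)-g(s')\ge (s-s')/r$ on the inverse CDF, the reduced cost becomes $\int_0^1\bigl(Q_\mu^2/g+Q_\nu^2/(t_f-g)\bigr)\,ds$, which is pointwise strictly convex in $g$ because $Q_\mu\neq 0$, and averaging the quantile functions of two distinct minimizers yields an admissible density of strictly smaller cost --- a clean contradiction (note the averaged quantile does not correspond to the averaged density, but that is irrelevant: any admissible competitor with smaller cost suffices). Your route buys an explicit convexity structure and settles cleanly the implication the paper leaves implicit, at the price of the quantile machinery; two small points of care: admissibility already forces $g_\sigma(s)\in[s/r,\;t_f-(1-s)/r]\subset(0,t_f)$, so your interiority remark is automatic, and the rearrangement implicit in Lemma~\ref{lemma:keylemma} is the anti-monotone one (non-increasing $T^x$), so the composition should be of the form $Q_\mu(1-f_\sigma)$ rather than $Q_\mu\circ f_\sigma$ --- an orientation convention that does not affect the strict-convexity argument.
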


\begin{proof}
Let $\pi$ be a minimizer as claimed in Proposition \ref{prop:prop1}, and let
$\pi_{xt}:=\int_y \pi$, $\pi_{yt}:=\int_x \pi$, and
$\sigma:=\iint_{xy} \pi$. Since
\[
\iiint_{x,y,t} c\pi=\iint_{xt}c_{xt}\pi_{xt}+
\iint_{yt}c_{yt}\pi_{yt},
\]
$\pi_{xt}$ is a minimizer of $\iint_{xt}c_{xt}\pi_{xt}$, and the same applies to $\pi_{yt}$. If this was not the case, there would be couplings $\hat\pi_{xt},\hat\pi_{yt}$ with strictly lower costs $\iint_{xt}c_{xt}\hat\pi_{xt}$, and $\iint_{yt}c_{yt}\hat\pi_{yt}$. These two couplings share the same marginal on the $t$-axis, namely,
\[
\int_x\hat\pi_{xt}(dx,dt)=\int_y\hat\pi_{yt}(dy,dt)=\sigma(t)dt.
\]
Then, by the gluing lemma \cite[page 11]{villani2009optimal}, there is a coupling $\hat \pi$ on $\mathbb R\times\mathbb R{\blue\times[0,t_f]}$ that agrees with the given marginals and has a lower cost. Thus, both $\pi_{xt},\pi_{yt}$ are optimal for the respective problems. We next argue that $\sigma$ is unique, and therefore, the conclusion follows by Lemma \ref{lemma:keylemma}.

To establish the uniqueness of the density on the $t$-axis, assume that there are two different minimizers $\pi^a$ and $\pi^b$ to start with.
Then, as above, each gives rise to a density on the $t$-axis, $\sigma^a(t)$ and $\sigma^b(t)$, respectively, as well as corresponding marginals and maps $(T^{\X,a},T^{\Y,a})$ and $( T^{\X,b},T^{\Y,b})$. Since both $\sigma^a(t)$ and $\sigma^b(t)$ satisfy the constraint of being $\leq r$, so does any convex linear combination, say $\bar \sigma=\frac12 \sigma^a+\frac12 \sigma^b$, and the convex combination $\bar \pi=\frac12 \pi^a+\frac12 \pi^b$ is also optimal. But then, the marginal 
\begin{align*}
\bar \pi_{xt}=  (T^{\X,a},{\rm Id})_{\sharp \frac12 \sigma^a} +
(\hat T^{\X,b},{\rm Id})_{\sharp \frac12 \sigma^b}
\end{align*}
is supported on a set that is not the graph of a function\footnote{\blue Note that $\bar \pi_{xt}$ satisfies the marginal constraints since, by virtue of $T^\X_{\sharp \sigma^i} = \mu$, for $i\in\{a,b\}$,  $T^\X_{\sharp \frac12 \sigma^a} + T^\X_{\sharp \frac12 \sigma^b}=\mu$. A similar statement holds for the  coupling $\bar \pi_{yt}$.}, unless of course $T^\X=T^{\X,a}=T^{\X,b}$. 
{\blue 
But if $\bar \pi_{xt}$ is not supported on a graph of a function, there exists a more ``economical'' coupling with strictly lower cost, obtained by 
 monotone rearrangement of $\bar \pi_{xt}$. A similar statement holds for $\bar \pi_{yt}$. This contradicts the nonuniqueness and completes the proof.}
\end{proof}

In the setting of Problem \ref{prob:problem1}, when $n>1$,  the transport cost of all mass that resides at a distance  $d=\|x-\xi\|$ is the same. Thus, the problem to transport through the toll cannot distinguish equidistant points from the toll.
In this case where the distributions, $\mu,\nu$ sit in $\mathbb{R}^n$ for $n\geq 1$, Proposition \ref{prop:prop1} and \ref{prop:prop2} can be readily extended as the cost in Problem \ref{prob:problem1} only depends on the distance of the points to $\xi$. Indeed, the problem is equivalent to solving the 1-dimensional problem between $\widetilde{\mu},\widetilde{\nu}$ the  measures such that for all $ A\subset \mathbb{R}$ measurable set\footnote{The notation $\mathds{1}_{S}(x)$, or $\mathds{1}_{S}$ for simplicity,  signifies the indicator function that takes the value $1$ when $x\in S$ and zero otherwise.},
$$
\widetilde{\mu}(A)=\int_{\mathbb{R}^n} \mathds{1}_{\{||x-\xi||\in A\}}\mu(dx).  
$$

Therefore we have existence of a unique solution $\widetilde{\pi}=(T^\X,T^\Y,{\rm Id})_{\sharp \sigma}$ to the 1-dimensional problem which gives rise to solutions $\pi$ to the n-dimensional problem in the following way:
For $\mu_{\tilde x},\nu_{\tilde y}$ the disintegrated measures \cite{chang1997conditioning} such that
%\footnote{\blue Here, for clarity, we temporarily revert to the equivalent notation $d\widetilde \mu(\tilde x)$, as opposed to $\widetilde \mu(d\tilde x)$, since $\mu_{\tilde x}$ needs to be indexed by the coordinate $\tilde x$.}
$$\mu(dx) = \int \mu_{\tilde x}(dx)\widetilde{\mu}(d\tilde x)$$
$$\nu(dy) = \int \nu_{\tilde y}(dy)\widetilde{\nu}(d{\tilde y}),$$
the solutions $\pi$ will be of the form
$$\pi(dx,dy)=\int \pi_{t}(dx,dy) \sigma(t)dt$$
for $\pi_{t}$ any coupling\footnote{The coupling between $\mu_{T^\X(t)}$ and $\nu_{T^\Y(t)}$ cannot be specified from the problem setting, since it does not affect the cost.} measure between $\mu_{T^\X(t)}$ and $\nu_{T^\Y(t)}$.

\begin{remark}[Generalization]\label{remark:general}
A further interesting generalization is when the toll through which the mass is to be transported is no longer a point but a set $\mathcal T \subset \mathbb{R}^n$, typically a curve or a manifold of higher dimension, 
with a Hausdorff measure $\mathcal{H}(dz)$ integrating to $1$. \black The transport problem for such a situation becomes one of minimizing 
$$
\inf_{\pi\in \Pi}\int \left(\frac{||x-z||^2}{t}+\frac{||y-z||^2}{t_f-t} \right)\pi(dx,dy,dt,dz),
$$
over couplings in
\begin{align*}
\Pi=& \bigg\{\pi \in \mathcal{P}(\mathbb{R}^n,\mathbb{R}^n,[0,t_f],\mathcal T) \mid \iiint_{y,t,z}\pi =\mu(dx),\; \\
& \hspace*{-10pt}\iiint_{x,t,z}\pi =\nu(dy),
\iint_{x,y}\pi \leq  r dt  \otimes \mathcal{H}(dz)\bigg \}.\end{align*}
The term to the right of the last inequality, representing the (normalized) Hausdorff measure of $\mathcal T$ can be further suitably modified to account for preference/ease of transporting through specific portions of the set $\mathcal T$. Physically such a problem may model flow through media, where $\mathcal T$ represents porous section that the mass must go through, from source $\mu$ to destination $\nu$.
Developing theory for this generality is beyond the scope of the current paper. $\Box$
\end{remark}

\section{Case studies: transport through tolls in 1D}\label{sec:one-dimensionalOMT}
We now present case studies that help visualize the general scheme for Monge-Kantorovich transport through tolls in $\mathbb R$, that is, in dimension $1$. Since the formalism in Section \ref{sec:OMTthroughtoll} casts the problem as a multi-marginal one, the coupling with marginal in 1D is already a measure in $\mathbb R^3$, with one of the axes the time that mass crosses the toll. The computational aspects and the code using the optimization toolbox-CVX \cite{grant2014cvx} to conduct all the experiments can be found at \url{https://github.com/dytroshut/OMT-with-Flux-rate-Constraint}.

%\subsection{Transport through a single toll}
We discuss four examples that help visualize the effect flow-rate constraints and the nature and support of the transportation coupling $\pi$.

Our first example is displayed in Fig.~\ref{fig:diracillustration}.
The source distribution is Dirac (i.e., concentrated at one point) located at a point $x < 0$, the toll is located at $\xi=0$, and the terminal distribution is absolutely continuous with respect to the Lebesgue with the density that has support on $\{y\in \mathbb R \mid y\geq 0\}$. The figure highlights the maps $T^\X$ and $T^\Y$ that couple $(x,t)$ and $(y,t)$, respectively, where $t$ denotes the time that mass originally at $x$ crosses the toll on its way to location $y$. Both maps are monotonically non-increasing.

\begin{figure}[htb]
    \centering
    \includegraphics[width=6.5cm]{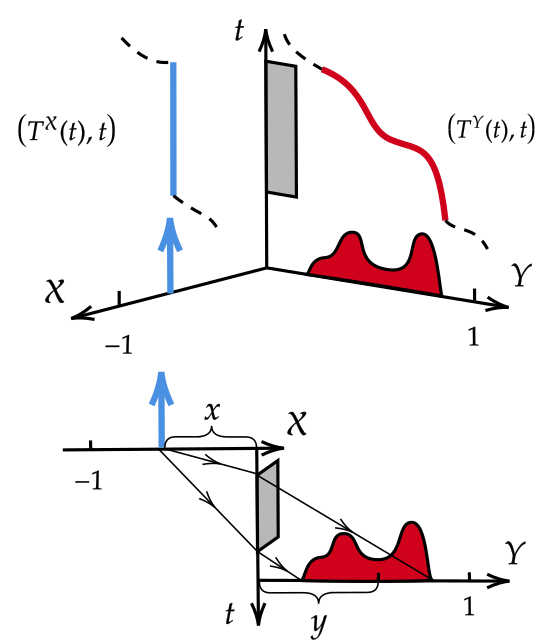}
    \caption{Illustration of solution to Problem \ref{prob:problem1}: the maps $T^\X(t)$, $T^\Y(t)$ are monotonically non-increasing, the $t$-marginal density $\sigma(t)$ is bounded by $r$.}
    \label{fig:diracillustration}
\end{figure}

In our second example, in Fig.~\ref{fig:Randonetoll},
the marginals are Gaussian mixtures.
The flow is visualized via shadowing the paths. Three instances of different bounds on flow-rate are depicted, highlighting how the bound affects the flow and the distribution of crossing times.

\begin{figure}[!htb]
\centering 
\begin{subfigure}{0.75\columnwidth}
\centering
\includegraphics[width=7cm]{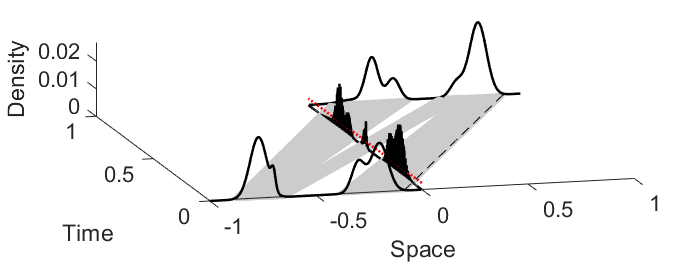}
\caption{$r=\infty$ (no flow-rate constraint)}
\label{fig:Rand2none}
\end{subfigure}
\medskip
\begin{subfigure}{0.75\columnwidth}
\centering
\includegraphics[width=7cm]{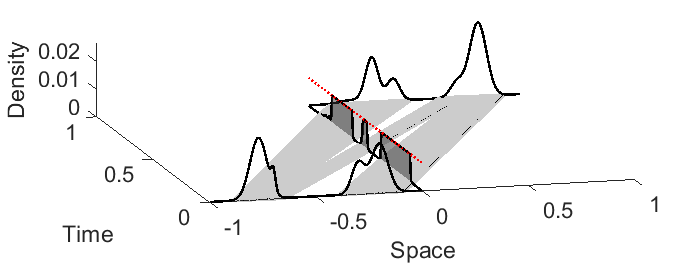}
\caption{$r=2$}
\label{fig:Rand2cap2}
\end{subfigure}
\medskip
\begin{subfigure}{0.75\columnwidth}
\centering
\includegraphics[width=7cm]{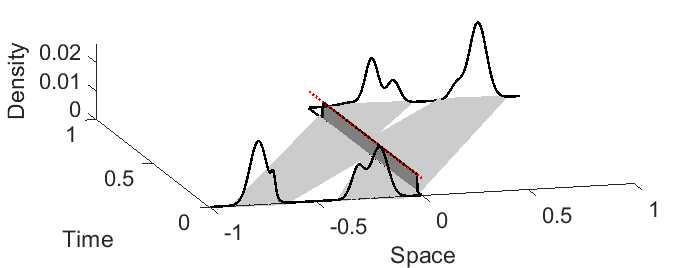}
\caption{$r=1.2$}
\label{fig:Rand12}
\end{subfigure}
\caption{Case (a), having no flow-rate constraint at the toll, corresponds to standard Monge-Kantorovich transport with particles moving at constant speeds depending on origin/destination. Cases (b,c) depict the situation where a flow-rate bound at the toll necessitates that mass is transported with different speeds at the two sides of the toll at $\xi=0$, so as to meet the imposed bound on the $t$-marginal.
}
\label{fig:Randonetoll}
\end{figure}

Our third example in Fig.~\ref{fig:3dcoupling1} helps visualize the coupling in $\mathbb R^3$ between $(x,y,t)$, for smooth marginals; the coupling is supported on a curve. In general, the mass on this curve is not uniform and the density is depicted with circles of suitable radius around corresponding points on the curve.
Couplings for two different values for acceptable flow-rate are shown ($r\in\{1.5, 3\}$) and it is seen that as $r$ is decreased the $t$-density tends to become more uniform.

\begin{figure}[htb]
\begin{subfigure}{1\columnwidth}
\centering
\includegraphics[width=6cm]{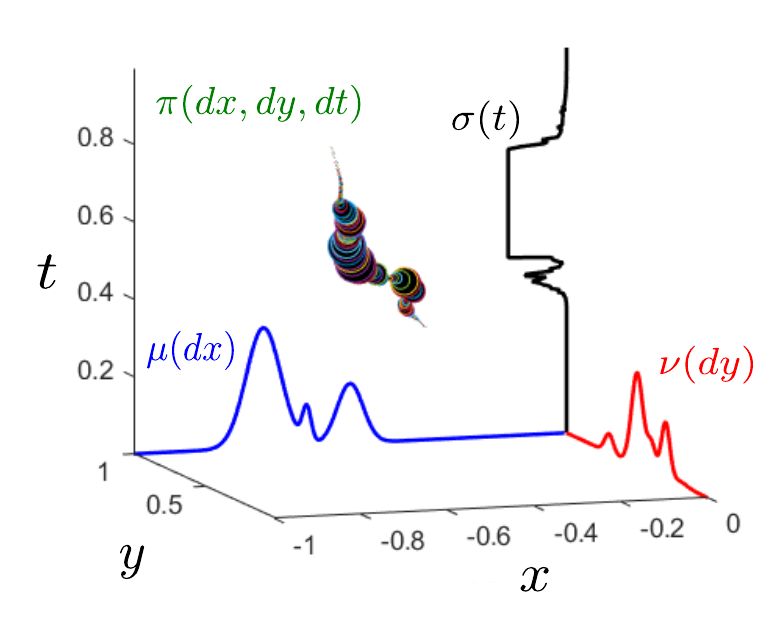}
\caption{$r = 3$}
\end{subfigure}
\medskip
\begin{subfigure}{1\columnwidth}
\centering
\includegraphics[width=6cm]{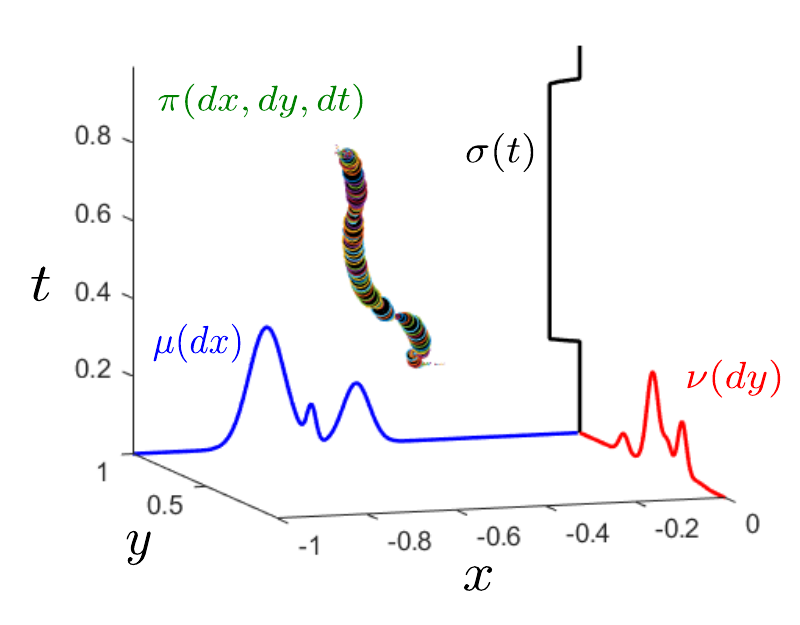}
\caption{$r=1.5$}
\end{subfigure}
\caption{Support of the coupling measure $\pi(dx,dy,dt)$ on a curve in $\mathbb R^3$; the density is depicted by circles of size proportional to magnitude.}
\label{fig:3dcoupling1}
\end{figure}

Our fourth example is drawn with a sketch in Fig.~\ref{fig:twotoll1}, and a simulation of transport between two Gaussian mixtures, through two tolls, in
Fig.~\ref{fig:2successivetolls}. The tolls are positioned at $\xi_1=-0.4$ and $\xi_2=0.4$. The density of the respective times of crossing, $t_1$ and $t_2$, are bounded by $r_1=1.5$ and $r_2=3$, respectively.

\begin{figure}[htb]
\begin{subfigure}{1\columnwidth}
\centering
\includegraphics[width=6.5cm]{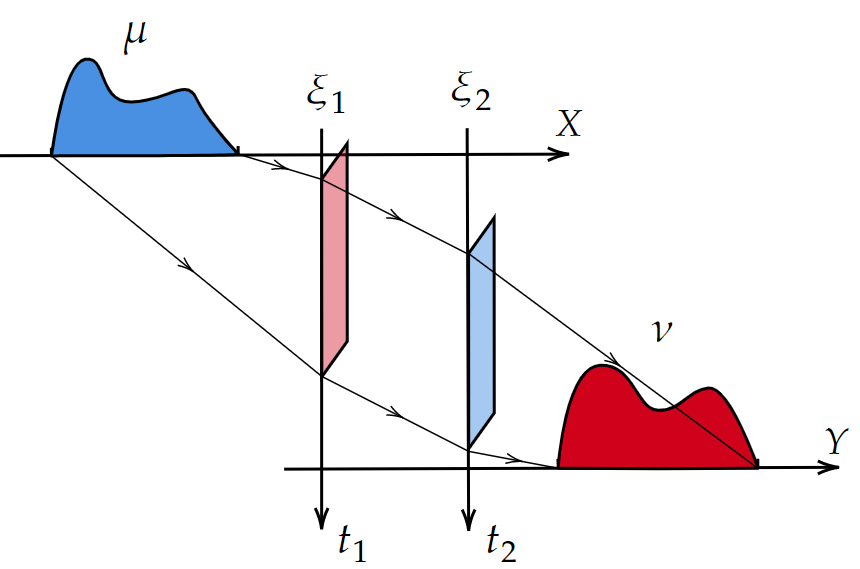}
\caption{Schematic of transport through two tolls at $\xi_1, \xi_2$.}
\label{fig:twotoll1}
\end{subfigure}
\medskip
\begin{subfigure}{1\columnwidth}
\centering
\includegraphics[width=7cm]{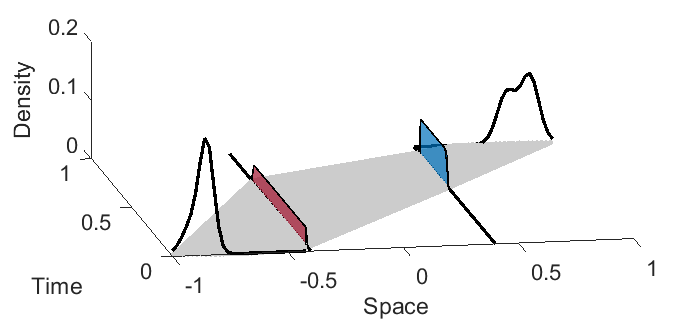}
\caption{Transport of Gaussian mixtures through two tolls with density-bounds  $r_1=1.5$ and $r_2=3$, at crossing times $t_1$ and $t_2$, respectively.}
\label{fig:2successivetolls}
\end{subfigure}
\caption{Schematic and simulation of $2$-toll transport plans.
}
\end{figure}

%%%%%%%%%%%%%%%%%%%%%%%%%%%%%%%%%%%%%%%%%%%%%%%%%%%%%%%%%%%%%%%%%%%%%%%%%%%%%%%%%%%%%
\section{Concluding Remarks}\label{sec:Extension}

The basic idea presented in this paper for dealing with flow-rate constraints has been to introduce a time-variable for when mass transits certain locations.
{\blue Then, the density of the corresponding marginal distribution, quantifies the amount of mass clearing the toll over a time interval, hence, flow-rate. Such a marginal distribution, as yet to be determined, represents a design parameters to be specified so as to meet flow-rate constraints.} Thereby, such problems can be cast in the form of {\em multi-marginal optimization}.

{\blue We note that the present work builds on, and  extends our earlier study \cite{stephanovitch2022optimal}, where under strong regularity assumptions on the marginals for the supply and demand, we developed a Benamou-Brenier approach for Monge transport through a {\em single} toll.  In contrast to this earlier work, the present formulation allows dealing with more general measures and multiple tolls, and in addition, it casts the problem as a linear program and allows efficient approximation using e.g., entropic regularization as in other timely multi-marginal optimization formulations \cite{haasler2021multimarginal,haasler2021multi}.

We conclude by showing how the basic framework of utilizing marginals to quantify timing information, applies to transportation problems with more complicated structure. Specifically we discuss two cases. First we explain the case where portion of the mass is not constrained to clear the toll, and second, a case of how ordering of arrival and departure times can be incorporated in the same framework.}

\begin{figure}[H]
\centering
\includegraphics[width=6.7cm]{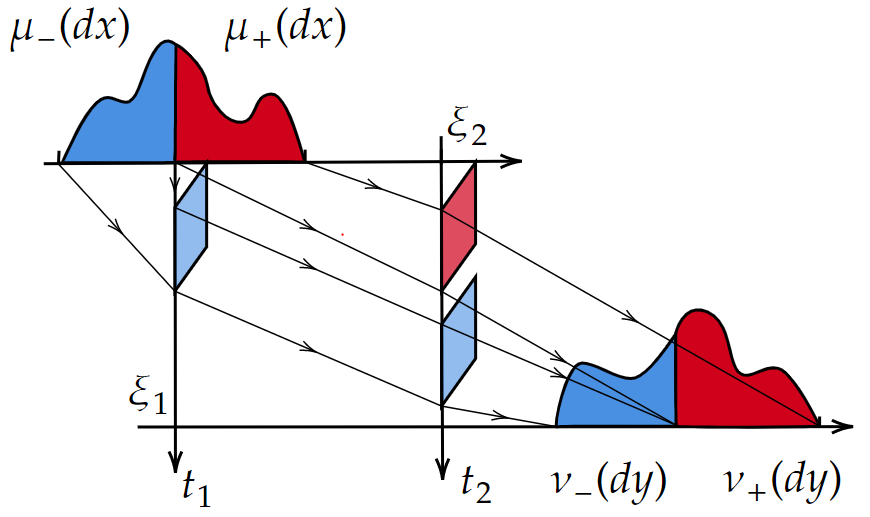}
\caption{Illustration of the optimal mass transport through two successive tolls with separating mass.}
\label{fig:twotoll2}
\end{figure}
\noindent 

For our first case, consider the schematic in Fig.~\ref{fig:twotoll2} where a fraction of a source distribution $\mu$ needs to clear two tolls, while the remaining fraction only needs to clear a single toll, due to its relative position with respect to the toll.
For instance, in this one-dimensional schematic, the source distribution $\mu$ is split by the toll at $\xi_{1}$ into $\mu_{+}:=\mu(x>\xi_{1})$ and $\mu_{-}:=\mu(x<\xi_{1})$. Accordingly, the target distribution $\nu$ needs to be split into $\nu_{+/-}$ corresponding to the two masses, $m_{+/-}$, so that 
\begin{align*}
    T_{\sharp \mu} = T^{\xi_{1},\xi_{2}}_{\sharp \mu_{-}} + T^{\xi_{2}}_{\sharp \mu_{+}} = \nu_{-} + \nu_{+},
\end{align*}
where $T^{\xi_{1},\xi_{2}}$ transports through two tolls whereas $T^{\xi_{2}}$ transports through a single toll.
{\blue With the flow-rate constraints on crossing time marginals, the Kantorovich formulation process is exactly as before, {\em provided the fractions of corresponding masses can be delineated}. For one-dimensional distributions, specifying the corresponding portions is straightforward.} In higher spatial dimensions, when there is no clear separation as in the one-dimensional schematic, the problem of selecting ``what fraction of mass needs to clear what toll'' is coupled to the optimization problem and has a combinatorial nature.

{\blue For our second case, we bring in timing} to prioritize departure and arrival,
so as to meet objectives and possibly mediate congestion along the flow.
To see this, we briefly discuss how to modify the standard Monge-Kantorovich setting in which all particles/mass transport at constant (that depends on the particle) speed along geodesics from source $x$ to destination $y$
according to the McCann flow
\begin{align*}
    \rho_{t} = \big[(1-t){\rm Id}(x)+tT^\Y(x)\big]\sharp \mu,
\end{align*}
over the window $t\in[0,1]$ with transport map $T^\Y(x)$. 
For simplicity we retain arrival time $t_a=1$, i.e., fixed, and only allow the departure time $t_d$ to vary. {\blue The marginal distribution of $t_d$ now represents a design parameter.} 
The formalism, once again, seeks a coupling measure $\pi$ that satisfies constraints and marginals. For simplicity, we introduce the Monge transport map $T^\Y(x)$, to specify the source-destination pairing. Then, the coupling of the variables $x,y,t_d$ gives that $\pi=({\rm Id},T^{t_d},T^\Y)_{\sharp \mu}$. And, if
$\tau(t,x)$ denotes the portion of time that a particle at $x$ is ``on the move,'' i.e., 
$\tau(t,x) = 
(t-t_d(x))\mathds{1}_{\{t > t_{d}(x)\}}$,
the McCann's displacement reads
\begin{align*}
    \rho_{t} = \big[(1-\tau(t,x)){\rm Id}(x)+\tau(t,x)T^\Y(x)\big]\sharp \mu.
\end{align*}
Figure.~\ref{fig:DepArrGaussian} shows an example where both
departure and arrival times are variables {\blue ($t_d$ and $t_a$, respectively)}, with marginals selected to minimize a cost functional of the form $\iiiint_{x,y,t_a,t_d} c(x,y,t_a,t_d) \pi(dx,dy,dt_a,dt_d)$, with cost $c(x,y,t_a,t_d) =t_d/x^2 + (y-x)^2 - t_a/y^2$. {\blue This choice} ensures a natural order in departure and arrival\footnote{The agents/particles that are closer to the destination leave first.}.

\begin{figure}[htb]
\centering
\includegraphics[width=8cm]{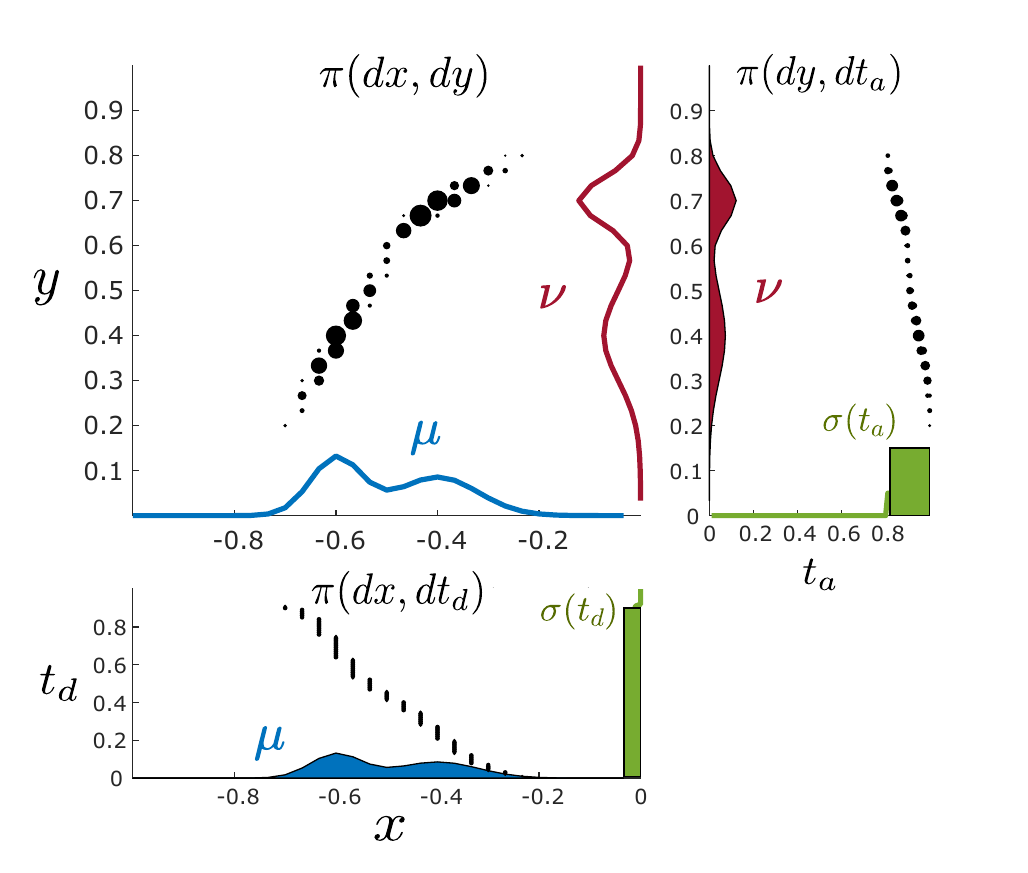}
\caption{\blue 
In earlier formulations, all particles/agents departed at the same time $t=0$ and arrived at the same time $t=t_f$.
Here, besides
meeting flow-rate constraints, 
we stratify departure and arrival so that particles/agents closer to the toll depart first, and arrive at the most distant target location again first (as it would be natural for a convoy transferring goods).
Departure and arrival rate bounds are set to $r_d = 1.1$ and $r_a = 5$, respectively. Coupling measures $\pi(dx,dy)$, $\pi(dx,dt_d)$, and $\pi(dy,dt_a)$ between timing variable are computed; it is seen that these are supported on graphs of maps (showing a monotonic correspondence).
For instance, $\pi(dx,dt_d)$ couples $\mu(dx)$ and $\sigma(t_d)$, with $\sigma(t_d) \leq r_d$ being the departure-time marginal. Due to the monotonicity of the cost $c(x,t_d) = t_d/x^2$, the coupling $\pi(dx,dt_d)$ indicates that mass closer to the toll departs earlier, while abiding by the departure rate bound $r_d$. For the timing of arrival $t_a$, properties of the coupling $\pi(dy,dt_a)$ are completely analogous.}\label{fig:DepArrGaussian}
\end{figure}

{\blue In closing, we mention that distributed flux constraints may fruitfully capture properties of matter through which transport takes place, e.g., in transport of pollutants through porous media. A generalization of the framework herein to a distributed setting would be desirable and at present open.}

\section*{Acknowledgments}
\noindent
Research supported in part by the  AFOSR under grant FA9550-23-1-0096 and ARO under W911NF-22-1-0292.
%All authors contributed to the writing, the problem formulation was by TTG.

\spacingset{.9}
\bibliographystyle{plain}  
\bibliography{ref}

\end{document}